\begin{document}
\frontmatter          
\pagestyle{headings}  
%
%
\mainmatter              
\title{Computational Complexity on Signed Numbers}
%
\titlerunning{Computational Complexity on Signed Numbers}
\author{Stefan Jaeger}
\authorrunning{Stefan Jaeger} 
%
\tocauthor{Stefan Jaeger}
\institute{
\texttt{www.jaeger.info}}

\maketitle              

\begin{abstract}
This paper presents a new representation of natural numbers and discusses its consequences for computability and computational complexity. The paper argues that the introduction of the first Peano axiom in the traditional definition of natural numbers is not essential. It claims that natural numbers remain usable in traditional ways without assuming the existence of at least one natural number. However, the uncertainty about the existence of natural numbers translates into every computation and introduces intrinsic uncertainty that cannot be avoided. The uncertainty in the output of a computation can be reduced, though, at the expense of a longer runtime and thus higher complexity. For the new representation of natural numbers, the paper claims that, with the first Peano axiom, {\bf P} is equal to {\bf NP}, and that without the first Peano axiom, {\bf P} becomes a proper subset of {\bf NP}.
\keywords{computational complexity, Peano axioms, logic}
\end{abstract}


\section{Introduction}
Natural numbers are the heart of mathematics and related sciences. Besides from their obvious practical use as a measure for counting, natural numbers are of great theoretical importance. They form the basis for higher-level number constructs, such as rational numbers. In logic and theoretical computer science, they are instrumental in solving important problems by allowing proof techniques such as diagonalization for instance. However, despite of their apparent simplicity and paramount importance, it turns out that the definition of natural numbers is not as straightforward as one would perhaps think. The typical framework used to formalize natural numbers is set theory, and in particular the set of Peano axioms named after the 19th century Italian mathematician Giuseppe Peano. Like all formulations of the more complex Zermelo-Fraenkel set theory, the Peano axioms also assert the existence of at least one set. For Peano axioms, this set is typically the empty set, which represents the number zero. It is exactly here, where the present paper brings in its criticism. It argues that the existence of the empty set is not necessarily required to define the natural numbers. In fact, it argues that we do not need to assert the existence of a single number in order to define natural numbers. Typically, natural numbers rest inductively on a base number, such as zero. When this number exists, we can induce an infinite set of numbers by the principle of induction. The present paper argues that even when we do not know whether the base number exists, we can still compute a meaningful successor that may or may not exist. In particular, the paper proposes a relaxed version of the Peano axioms that dispenses with the first Peano axiom, which guarantees the existence of the base number. The general idea is to allow different codings for a natural number, depending on one's belief in the existence of the number, ranging from the most efficient binary code to a simple length encoding. Using the relaxed Peano axioms, it is still possible to perform computations in the traditional manner, but the representation of a natural number is not unique anymore.

The following sections present an informal discussion of this approach.
The paper structure is as follows: Section~\ref{NaturalNumbersSection} recapitulates the traditional inductive definition of natural numbers and proposes an alternative definition that avoids the first Peano axiom. Section~\ref{IntrinsicUncertaintySection} explains the ramifications of removing the first Peano axiom, discussing the intrinsic uncertainty introduced by the relaxed set of Peano axioms and its effect on computability. Section~\ref{ComputationalComplexitySection} shows the consequences for computational complexity followed by a conclusion summarizing the results.

\section{Natural Numbers}
\label{NaturalNumbersSection}

Natural numbers are usually characterized by the Peano axioms, which can be summarized as follows~\cite{peano}:
\begin{enumerate}
\item There is a natural number~$0$ (the first Peano axiom).
\item Every natural number~$a$ has a natural number successor $S(a)$.
\item There is no natural number whose successor is~$0$.
\item Distinct natural numbers have distinct successors: if $a \neq b$, then $S(a) \neq S(b)$.
\item If a property is possessed by~$0$ and also by the successor of every natural number which possesses it, then it is possessed by all natural numbers. 
\end{enumerate}
Based on the Peano axioms, it is possible to develop a set-theoretical construction of the natural numbers. For instance, a typical construction represents the number zero by the empty set and uses a successor function~$s$ to define the successor~$s(a)$ of a natural number~$a$ as $s(a) = a \cup \{a\}$ for every number~$a$. This leads to the following series of natural numbers, where each natural number is equal to the set of natural numbers smaller than it:
\begin{eqnarray} 
0 & = & \emptyset \\ \nonumber 
1 & = & \{0\} = \{\emptyset\} \\ \nonumber
2 & = & \{0,1\} = \{0, \{0\}\} = \{\emptyset, \{\emptyset\}\} \\ \nonumber
3 & = & \{0,1,2\} = \{0, \{0\}, \{0, \{0\}\}\} = \{\emptyset, \{\emptyset\}, \{\emptyset, \{\emptyset\}\}\} \\ \nonumber
& \vdots & \\ \nonumber
n & = & \{0,1,2,...,n-2,n-1\} \\ \nonumber
& = & \{0,1,2,...,n-2\} \cup \{n-1\} \\ \nonumber
& = & (n-1) \cup \{n-1\} 
\end{eqnarray}
Natural numbers generated inductively in such a way are either directly or indirectly a successor of the empty set.\bigskip

Traditionally, the first Peano axiom is taken for granted. Yet, there remains doubt about the existence of the natural numbers and the existence of the empty set in particular. The Peano axiom cannot eliminate the uncertainty about the existence of a natural number, it can merely subsume it. This means that either all the uncertainty is implicit in the first Peano axiom, and thus in the uncertainty about the existence of the root number, or, all the uncertainty is implicit in the inductively defined successors of the root number, and thus in whether a particular successor exists or does not exist. If the bit encoding of a natural number contains the bit encoding of the root number, then either the bit representing the root number or the remaining bits are uncertain.


The present paper proposes an alternative representation of the natural numbers that does not require the empty set to exist. In fact, the uncertainty about the existence of the empty set and the existence of the natural numbers as a whole is implicit in the proposed construction of the natural numbers. The main motivation is the principle of Occam's razor, or the law of parsimony, which states that the simpler of two equally powerful theories is always superior. In the case of natural numbers, this would be the construction with the fewest assumptions. Since we do not know whether the empty set exists, and since this information is not essentially needed for the representation, the proposed definition will dispense with the first Peano axiom.

The alternative representation presented in this paper rests again inductively on the empty set. This time, however, the existence of the empty set is not required. The main idea is to add a sign bit to the classical binary representation of a natural number, indicating whether the root number zero is coded as a~$1$ (we believe in its existence), or as~$0$ (we do not believe in its existence). In the following, the text will refer to these numbers as b-numbers and use the symbol $\mathbb{B}$~for the set of b-numbers. Furthermore, let $b$ be a b-number with binary representation $b_{n}b_{n-1}\ldots b_{1}| b_{0}$. Then, the last bit $b_{0}$ will be referred to as the sign of~$b$. The sign indicates whether the coding of a zero is either~$0$ ($b_{0}=1$) or~$1$ ($b_{0}=0$). The remaining bits $b_{n}b_{n-1}\ldots b_{1}$ of~$b$ encode a natural number using a binary coding scheme.

Representations of b-numbers differ in an important way from the traditional binary representations: B-numbers can have different possible string representations, depending on our belief in the existence of the root number. The b-number value coded by a binary string is not necessarily given by a sum of powers of two, as in the traditional case, but can also be defined by a less efficient encoding, such as the length of the string. The need for a less efficient encoding is due to the intrinsic uncertainty in the existence of the root number (or empty set) and the uncertainties in its successors, which introduces inevitably uncertainty into the bit representations of natural numbers. The next section will elaborate on this intrinsic uncertainty.

\section{Intrinsic Uncertainty}
\label{IntrinsicUncertaintySection}

Natural numbers have always played an important role in theoretical computer science~\cite{turing1,turing2}. In particular, the concept of computably enumerable sets, such as languages accepted by Turing machines, is intimately connected to natural numbers. A change in the foundation of natural numbers, like the one presented in this paper, will therefore have an immediate impact on enumerable sets and computability. We will see that, under the new representation of natural numbers, all computations are uncertain.

As indicated in the motivation above, there is uncertainty involved in the coding of zero, which can be either coded as a~$0$ or as a~$1$, depending on the sign. Accordingly, at least one bit in the bit representation of~$b$ is uncertain. Let us refer to this uncertainty in the encoding of~$b$ as the uncertainty~$E(b)$ of~$b$. For instance, for $b=0$, there are two possible encodings of~$b$, namely either $0|1$ or $1|0$. In the first case, the value of the actual bit encoding~$b$ is~$0$ and the value of the sign bit indicates that a zero indeed represents a zero. The second possible encoding for $b=0$, $1|0$, represents zero as a~$1$, with the value of the sign bit indicating that a one represents in fact a zero. On the other hand, for $b=1$, the two possible encodings are $0|0$ or $1|1$. The first encoding represents a one as a zero as indicated by the sign bit, while the second encoding uses a one to represent a one, which is again indicated by the sign bit. While both encodings for $b=1$ are possible, only one of the two encodings reflects the reality, i.e. the true value of the sign bit that we do not know. Accordingly, only one of the two possible encodings for $b=0$ can reflect the true value of the sign bit. Thus, the uncertainty involved in picking the right encoding for $b=0$ or $b=1$ is one bit. The following holds for both cases: When we know the sign~$b_{0}$, we do know nothing about~$b_{1}$. Vice versa, when we know the value of~$b_{1}$, we do not know the value of the sign~$b_{0}$.

The uncertainty in the bit representation of the root number extends inductively to bit strings of larger b-numbers. The bit string representation of a b-number will contain bits based on the correct encoding of the empty set (correct bits) and bits based on the wrong encoding of the empty set (incorrect bits). Given the intrinsic uncertainty in the encoding of the root number, however, it is impossible to identify the two sets of correct and incorrect bits with absolute certainty. There will always be uncertainty involved in the discrimination between correct and incorrect bits. However, one can assume that the coding of one of the two sets is fixed (known or certain bits) and the coding of the other set is uncertain (uncertain bits) in the sense that all its bits may need to be flipped to make them consistent with the remaining bits. Thus, the bit representation of a b-number consists of two bit groups. One group contains the bits that are assumed to be correct, while the other group contains the uncertain bits. Both groups make different assumptions about the existence of the empty set, and there is uncertainty as to whether all the bits in one of the groups need to be flipped. This division into two groups has immediate consequences for the enumeration process of languages accepted by Turing machines. When using a bit representation for the enumeration that encodes the program code, the input to a Turing machine, and its computed output, then there must be uncertainty involved.

The uncertainty in the bit representation of b-numbers does not need to be evenly distributed among the empty set and the other bits. All possible uncertainty distributions fall between two extreme cases: In the first case, the encoding of the root number is assumed to be known with no uncertainty involved at all. Hence, the remaining bits of the bit string representation of a b-number must all be random. Consequently, the only way to code a natural number in this way is to use the length of the bit string. Assuming no uncertainty in the encoding of the empty set thus comes with a cost, namely the cost of having to choose an inefficient coding of natural numbers. In the second case, numbers are encoded in the most efficient way possible, meaning that every bit is assumed to be known except for the bit encoding the root number; the sign bit, which is uncertain. The second case thus chooses the most efficient; i.e. shortest, encoding of a natural number at the expense of knowing nothing about the encoding of the root number itself. While the first case needs $n$ bits to encode a natural number~$n$, the second case only needs $\lceil\log(n)\rceil$ bits (excluding the sign bit). All other uncertainty distributions fall between these two extremes and differ in how efficiently they can encode a natural number.

In order to quantify the uncertainty~$E(b)$ of a b-number for arbitrary~$b$, let~$I(p)$ denote the standard entropy for a probabilistic event with probability~$p$~\cite{shannon}:
\begin{equation}
I(p) = -p\cdot log_{2}(p) - (1-p) \cdot log_{2}(1-p).
\label{twoCaseEntropyEquation}
\end{equation}
The following theorem then establishes a connection between~$I(p)$ and the uncertainty~$E(b)$ in the representation of a b-number~$b$.

\begin{theorem} [Entropy theorem]\label{EntropyBoundTheorem}
For any b-number~$b>0$, the uncertainty $E(b)$ is bound by $I\left(1/(b+1)\right) \le E(b) \le I\left(1/(\lceil\log_{2}(b+1)\rceil+1)\right)$.
\end{theorem}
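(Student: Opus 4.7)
The plan is to read $E(b)$ as the binary entropy $I(p)$ where $p = u/N$ is the fraction of uncertain bits in an admissible bit-string representation of $b$, with $N$ the total length (data bits together with the sign bit) and $u$ the number of bits in the ``uncertain'' group. Once $E(b)$ is identified as $I(u/N)$, the theorem becomes a statement about the range of $p$ attained across representations of $b$, combined with the standard monotonicity and symmetry properties of the binary entropy.

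First I would pin down the two extreme representations highlighted in Section~\ref{IntrinsicUncertaintySection}. In the most efficient encoding, the $\lceil \log_2(b+1)\rceil$ data bits carry the binary expansion of $b$ and are assumed certain, while only the sign bit is uncertain, so $N = \lceil\log_2(b+1)\rceil + 1$ and $u = 1$, giving $E(b) = I\!\left(1/(\lceil\log_2(b+1)\rceil+1)\right)$, which is exactly the proposed upper bound. In the length encoding the string has $b+1$ bits in total, and either the sign bit alone is uncertain or the $b$ data bits are all uncertain; using the symmetry $I(p) = I(1-p)$ of the binary entropy, both sub-cases yield $E(b) = I\!\left(1/(b+1)\right)$, the proposed lower bound.

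Next I would argue that every other admissible distribution of uncertainty lies between these extremes. The discussion preceding the theorem forces an intermediate representation to use some $N$ with $\lceil\log_2(b+1)\rceil+1 \le N \le b+1$, and to have between $1$ and $N-1$ uncertain bits; after folding by the symmetry of $I$, one may assume $u/N \le 1/2$. A short calculation then shows $1/(b+1) \le u/N \le 1/(\lceil\log_2(b+1)\rceil+1) \le 1/2$ for every $b \ge 1$. Since $I$ is strictly increasing on $[0,1/2]$, applying monotonicity to both endpoints yields the desired chain of inequalities.

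The main obstacle will be the first step rather than the entropy manipulation: the paper defines $E(b)$ only informally, so the proof must commit to the reading $E(b) = I(u/N)$ and justify that the set of admissible pairs $(N,u)$ is exactly parameterized by how much of the uncertainty is shifted from the sign bit to the data bits. Once that modelling step is fixed, the remainder is a routine application of the monotonicity and symmetry of the binary entropy, with the two explicit encodings above realising the two bounds.
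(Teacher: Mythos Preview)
Your overall strategy matches the paper's: identify the two extreme representations (efficient binary coding with only the sign bit uncertain, giving total length $N=\lceil\log_2(b+1)\rceil+1$; length coding with the data bits uncertain, giving $N=b+1$), read off $E(b)=I(1/N)$ for each, and invoke the monotonicity of $I$ on $[0,1/2]$. The paper in fact does less than you attempt---it verifies the base case $b=1$, exhibits the two extremes, and simply asserts that they delimit all cases, concluding from $I(1/(b+1))<I(1/(\lceil\log_2(b+1)\rceil+1))$ without a separate argument for intermediate distributions.

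Your added treatment of intermediate representations, however, contains a step that fails as written. You let $u$ range freely over $\{1,\ldots,N-1\}$ and then claim a ``short calculation'' yields $u/N \le 1/(\lceil\log_2(b+1)\rceil+1)$ after folding to $u/N\le 1/2$. That is false in general: taking $N=\lceil\log_2(b+1)\rceil+1$ and $u=\lfloor N/2\rfloor$ gives $u/N$ close to $1/2$, which already exceeds $1/N$ whenever $N\ge 4$. The resolution is that in the paper's model the two bit-groups are always the single sign bit versus all of the data bits, so after the symmetry fold one has $u=1$ identically and only $N$ varies over $[\lceil\log_2(b+1)\rceil+1,\,b+1]$; once you impose that restriction, your monotonicity argument goes through and coincides with the paper's.
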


\begin{proof}
Let~$b$ be a b-number with $b=b_{n}b_{n-1}\ldots b_{1}| b_{0}$. For~$b=1$, we have already established the validity of Theorem~\ref{EntropyBoundTheorem}. In this case, the uncertainty in the bit representation of~$b$ is exactly one bit, as both delimiting expressions in Theorem~\ref{EntropyBoundTheorem} are $I(0.5) = 1$. The uncertainty~$E(b)$ in the representation of b-numbers greater one is delimited by two extreme cases: In the first case, all uncertainty is contained in the sign bit and the most efficient traditional binary coding is used for the actual coding, $b_{n}b_{n-1}\ldots b_{1}$, of a natural number. In this case, the uncertainty manifests in the question of whether to flip the sign $b_{0}$ so that it becomes consistent with the remaining bits in the actual coding of the natural number. This is equivalent to saying that the sign bit uses the opposite encoding of~$0$ and~$1$. If the encoding $b_{n}b_{n-1}\ldots b_{1}$ uses a~$1$ to represent a~$1$, and a~$0$ to represent a~$0$, the sign bit~$b_{0}$ uses the opposite encoding, using a~$1$ to represent a~$0$ and a~$0$ to represent a~$1$. Vice versa, if the encoding $b_{n}b_{n-1}\ldots b_{1}$ uses a~$1$ for a~$0$ and a~$0$ for a~$1$, then the coding of the sign bit~$b_{0}$ uses a~$1$ to encode a~$1$ and a~$0$ to encode a~$0$. The question is what encoding to choose, so that the encoding of the b-number becomes consistent in the sense that $b_{n}b_{n-1}\ldots b_{1}| b_{0}$ uses an identical encoding. In other words, which part of $b_{n}b_{n-1}\ldots b_{1}| b_{0}$, either $b_{n}b_{n-1}\ldots b_{1}$ or $b_{0}$ do we need to change? This is a binary decision process, with two event probabilities involved, namely $1/(\lceil\log_{2}(b+1)\rceil+1)$ and $1-1/(\lceil\log_{2}(b+1)\rceil+1)$, where $n = \lceil\log_{2}(b+1)\rceil$ is the length of the actual encoding $b_{n}b_{n-1}\ldots b_{1}$. Hence, the uncertainty $E(b)$ in the representation $b_{n}b_{n-1}\ldots b_{1}| b_{0}$ of~$b$ is $I\left(1/(\lceil\log_{2}(b+1)\rceil+1)\right)$.

The second extreme case occurs when all uncertainty is in the actual encoding of~$b$, i.e. in $b_{n}b_{n-1}\ldots b_{1}$, while the sign bit~$b_{0}$ is assumed to be certain. Here, the only way to encode~$b$ is by the length of its bit string $b_{n}b_{n-1}\ldots b_{1}$ because all bits are considered uncertain, and therefore $b=n$. Again, a decision needs to be made as to what bits to flip so that the entire representation $b_{n}b_{n-1}\ldots b_{1}| b_{0}$ becomes consistent. This is a binary decision process with probabilities $1/(b+1)$ and $1-1/(b+1)$. Thus, the uncertainty $E(b)$ in the representation of~$b$ is $I\left(1/(b+1)\right)$.
The theorem now follows directly from $I\left(1/(b+1)\right) < I\left(1/(\lceil\log_{2}(b+1)\rceil+1)\right)$.
\end{proof}

Based on the intrinsic uncertainty in b-numbers, let us now define the uncertainty~$E(C)$ of a computation~$C$. In order to do so, let the triple~$C = (T,b,o)$ represent the computation of a machine~$T$ on input~$b$ that results in an output bit~$o$. The value of~$o$ depends on whether or not~$T$ accepts its input~$b$. We can subsume the computation~$C$ into a single number by simply concatenating the traditional individual bit encodings of~$T$,$b$, and~$o$. To convert this traditional encoding into a b-number encoding, one of the bits need to be chosen as the sign bit. Without restriction of generality, we can assume that the result bit~$o$ plays the role of the sign bit. In this way, it is possible to enumerate all inputs accepted or rejected by~$T$ using b-numbers. The uncertainty~$E(C)$, or entropy, of the computation~$C$ is now defined as the uncertainty in its b-number encoding. It follows straightforwardly that all computations on b-numbers are uncertain, in the sense that their output bit is uncertain, as summarized by the following uncertainty corollary:

\begin{corollary}\label{UncertaintyCorollary}
Every computation~$C = (T,b,o)$ on b-numbers is uncertain.
\end{corollary}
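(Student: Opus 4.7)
The plan is to reduce the claim directly to the Entropy Theorem via the encoding construction described just before the corollary. The setup already fixes a concrete translation of the triple $C=(T,b,o)$ into a b-number: concatenate the classical bit encodings of $T$, $b$, and $o$ into a single binary string $b_n b_{n-1}\ldots b_1 \mid b_0$, with the output bit $o$ installed as the sign bit $b_0$. Call the resulting b-number $c$. By definition, $E(C)=E(c)$, so it suffices to show that $E(c)>0$ and that this uncertainty manifests in the sign-bit position, which is $o$.

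First I would handle the value of $E(c)$. For $c>0$, the Entropy Theorem gives the lower bound $E(c)\ge I\bigl(1/(c+1)\bigr)$. Since $I(p)=-p\log_2 p-(1-p)\log_2(1-p)>0$ for every $p\in(0,1)$, this lower bound is strictly positive. The remaining case $c=0$ is covered by the explicit discussion preceding the theorem, which already shows that the two admissible encodings $0\mid 1$ and $1\mid 0$ carry entropy $I(1/2)=1$. Either way, $E(C)=E(c)>0$.

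Next I would translate the positive entropy into uncertainty of the sign bit itself. The proof of the Entropy Theorem characterizes $E(c)$ as the entropy of the binary decision of which of the two bit groups, the actual-encoding group $b_n b_{n-1}\ldots b_1$ or the singleton sign group $\{b_0\}$, must be flipped to make the whole string internally consistent. In the outcome where the sign group is flipped, the written value of $b_0$ is the opposite of its true value; in the other outcome, the written value of $b_0$ is correct. Both outcomes occur with nonzero probability because $E(c)>0$, so the true value of $b_0$ is not determined by its written value. Since $b_0=o$ by construction, the output bit is uncertain, which is the content of the corollary.

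The main obstacle I anticipate is philosophical rather than computational: one must be careful that the ``binary decision'' appearing in the Entropy Theorem really does pin down the sign-bit position and not some other flip pattern. This is why installing $o$ specifically as the sign bit in the encoding of $C$ is essential. Once that choice is made, the two groups in the proof of Theorem~\ref{EntropyBoundTheorem} are exactly $\{b_n,\ldots,b_1\}$ and $\{o\}$, and the nontrivial entropy of the flip decision is inherited, without loss, by the output bit alone.
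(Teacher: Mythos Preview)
Your proposal is correct and follows essentially the same route as the paper: encode $C$ as a b-number with $o$ as the sign bit, invoke the Entropy Theorem to get $E(C)>0$, and conclude that the output bit is uncertain. The only cosmetic difference is in the last step---the paper argues that positive total entropy forces the \emph{average} per-bit uncertainty (hence that of $o$) to be positive, while you localize the uncertainty to the sign bit directly via the two-group flip decision; your version is a bit more explicit, and your separate handling of $c=0$ patches a case the paper's proof silently skips (Theorem~\ref{EntropyBoundTheorem} is stated only for $b>0$).
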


\begin{proof}
According to Theorem~\ref{EntropyBoundTheorem}, every b-number~$b$ contains uncertainty, and the minimum uncertainty possible in its encoding is $I\left(1/(b+1)\right)$. Consequently, any enumerating b-number encoding of a computation~$C$ will contain uncertainty. Thus, the average uncertainty in each bit, including the uncertainty in the output bit~$o$, will be greater than zero.
\end{proof}

Corollary~\ref{UncertaintyCorollary} states that any computation based on b-numbers has an uncertainty greater than zero. It is, however, possible to reduce the uncertainty of any computation to an infinitesimal value. In order to show this, let two computations $C = (T,b,o)$ and $C' = (T',b',o')$ be equivalent if there exists a bijective mapping $M:\mathbb{B}\rightarrow\mathbb{B}$, so that $b'=M(b)$ and $o=o'$ for all inputs~$b$, where~$\mathbb{B}$ denotes the set of b-numbers. Thus, an equivalent computation produces the same output but uses a different encoding of the input. By choosing a proper re-encoding of the input, i.e. a suitable mapping~$M$, an equivalent computation can reduce the uncertainty of the computation to any infinitesimal value~\cite{jaegerArXivUncertainty}. The following entropy reduction theorem summarizes this result.

\begin{theorem} [Entropy reduction theorem]\label{EntropyReductionTheorem}
Let $C = (T,b,o)$ be a computation on b-numbers. For any $\epsilon>0$, there exists an equivalent computation $C' = (T',M(b),o)$ with $E(C')< \epsilon$ for a bijective mapping $M:\mathbb{B}\rightarrow\mathbb{B}$.
\end{theorem}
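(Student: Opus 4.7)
The plan is to exploit the fact that the upper bound on $E(b)$ in Theorem~\ref{EntropyBoundTheorem} tends to zero as $b$ grows, so that any equivalent computation whose input (and therefore whose b-number encoding) is sufficiently large will automatically have small uncertainty. Given $\epsilon > 0$, I would first choose an integer $N_{0}$ large enough that $I\left(1/(\lceil\log_{2}(N_{0}+1)\rceil+1)\right) < \epsilon$; such an $N_{0}$ exists because $I(p) \to 0$ as $p \to 0$, and the argument of $I$ tends to zero as $N_{0} \to \infty$.

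Next I would construct the bijection $M$ so that the encoding of $C' = (T', M(b), o)$ as a single b-number has value at least $N_{0}$. Since that encoding is obtained by concatenating the bit strings of $T'$, $M(b)$, and $o$ (with $o$ as the sign bit), it suffices to send the given $b$ to some b-number $K$ that is so large that the concatenation already exceeds $N_{0}$. A concrete choice is the transposition on $\mathbb{B}$ that swaps $b$ with $K$ and fixes every other element; this is trivially a bijection and its own inverse. The machine $T'$ is then defined to apply $M^{-1}$ to its input, recover the original argument, and simulate $T$ on it; since $M^{-1}(M(x)) = x$ for every $x \in \mathbb{B}$, the output of $C'$ agrees with that of $C$ on every input, so the two computations are equivalent in the sense defined just above the theorem.

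Finally I would apply Theorem~\ref{EntropyBoundTheorem} to the b-number $N'$ that encodes $C'$. Because $N' \ge N_{0}$ and $I$ is monotonically increasing on $(0, 1/2)$, this yields $E(C') \le I\left(1/(\lceil\log_{2}(N'+1)\rceil+1)\right) \le I\left(1/(\lceil\log_{2}(N_{0}+1)\rceil+1)\right) < \epsilon$, which is exactly what the theorem claims.

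The step I expect to be most delicate is the self-referential bookkeeping in the encoding of $C'$: enlarging $K$ also enlarges the description of $T'$, which has to carry enough information about the transposition to invert it. One therefore needs to check that enlarging $K$ really does push the overall encoding $N'$ above the threshold $N_{0}$, rather than the two effects somehow balancing. With the transposition choice above this is straightforward, because the description of $T'$ grows only as $O(\log K)$ while $M(b) = K$ already contributes its full bit length $\lceil\log_{2}(K+1)\rceil$ to the concatenation, so $K$ can be chosen large enough to dominate and force $N' \ge N_{0}$. A minor secondary point is the harmless boundary case $b = 1$, where Theorem~\ref{EntropyBoundTheorem} gives $E(b) = 1$; the construction above simply maps this $b$ to a large $K$ just like any other input, so it poses no difficulty.
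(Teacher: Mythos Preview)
Your proposal is correct and follows essentially the same strategy as the paper: pick a bijection $M$ that sends $b$ to something with a long enough bit string, set $T' = T\circ M^{-1}$ to recover the original input and run $T$, and conclude that the b-number encoding of $C'$ has entropy below $\epsilon$. The paper differs only cosmetically---its $M$ pads $b$ with extra bits agreeing with the assumed encoding of the empty set and then argues directly from the resulting bit-group ratio, whereas you use a transposition with a large $K$ and appeal to the upper bound of Theorem~\ref{EntropyBoundTheorem}---but the mechanism and the definition of $T'$ are the same.
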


\begin{proof}
To prove Theorem~\ref{EntropyReductionTheorem} let us construct an equivalent computation~$C'$ that satisfies the required uncertainty constraint. The upper uncertainty threshold~$\epsilon$ in Theorem~\ref{EntropyReductionTheorem} can be written in the following form for a probability~$p^\ast$:
\begin{equation}
\epsilon = -p^\ast \cdot log_{2}(p^\ast) - (1-p^\ast) \cdot log_{2}(1-p^\ast)
\label{EntropyReductionProofEquation}
\end{equation}
Now, let~$M$ be a machine that implements the bijective mapping between b-numbers. It can do so by augmenting the original input string~$b$ with additional bits whose values are based on the same encoding of the empty set. When~$M$ keeps adding these bits until the ratio of the two bit groups used in the encoding of~$C$ becomes smaller than~$p^\ast$, the uncertainty of $C$ will eventually become smaller than~$\epsilon$. Once the required uncertainty is reached, a simple removal of the added bits followed by the execution of the original computation $T(b)$ will produce the same result as~$C$. Hence, $C' = (T',M(b),o)$ with $T'=T\circ M^{-1}$ is an equivalent computation to $C = (T,b,o)$ that satisfies the required uncertainty constraints $E(C')< \epsilon$.
\end{proof}

\section{Computational Complexity}
\label{ComputationalComplexitySection}

This section will discuss the consequences that b-numbers have for computational complexity~\cite{papadimitriou2003computational}. It shows how b-numbers affect the relationship between the complexity classes {\bf P} and {\bf NP}. In particular, it shows how the first Peano axiom is intimately connected with the hierarchy of complexity classes when we use b-number encodings. The results of Theorem~\ref{EntropyBoundTheorem} and Theorem~\ref{EntropyReductionTheorem} will play an important role here. The next paragraph is a brief summary of the main definitions~\cite{cormen,garey-book}.

Let $\Sigma$ be a finite input alphabet associated with a Turing machine~$M$, with $|\Sigma|\ge 2$. Furthermore, let $\Sigma^*$ be the set of finite strings over $\Sigma$. Then a language over $\Sigma$ is a subset L of $\Sigma^*$. The Turing machine~$M$ is said to accept an input string $w\in\Sigma^*$ if the computation terminates in an accepting state. On the other hand, $M$ does not accept~$w$ if the computation either terminates in a rejecting state or if the computation fails to terminate. The language~$L(M)$ accepted by~$M$ is defined by
\begin{equation}
L(M) = \{w\in\Sigma^* \, | \, M \mbox{accepts}~w\}
\end{equation}
Now, let $t_M(w)$ be the number of steps in the computation of~$M$ on input~$w$.
If the computation for input~$w$ never halts, then $t_M(w) = \infty$. Using the run time expression $t_M(w)$, we can define the complexity class~{\bf P} based on the worst case run time $T_M(n)$ of~$M$ for all $n\in\mathbb{N}$:
\begin{equation}
T_M(n) = \max\{t_M(w) \, | \, w\in\Sigma^n\}
\end{equation}
where $\Sigma^n$ is the set of all strings over $\Sigma$ of length~$n$. A Turing machine~$M$ is said to run in polynomial time if there exists a~$k$ such that for all~$n$, $T_M(n) \le n^k+k$. Then, the complexity class~{\bf P} is the set of all languages for which there exists a polynomial-time Turing machine~$M$, i.e.:
\begin{eqnarray}
{\bf P} & = \{ & L \, | \, L=L(M) \, \mbox{for some Turing machine~M} \\
& & \mbox{which runs in polynomial time}\} \nonumber
\end{eqnarray}
If a problem can be solved in polynomial time on a non-deterministic Turing machine, we say the Problem is in {\bf NP}. Obviously, {\bf P} is a subset of {\bf NP}, but whether it is a proper subset or not is an open question.
Without restriction of generality, we can confine $\Sigma$ to a binary alphabet; i.e. $\Sigma = \{0,1\}$. For each member $L'$ of {\bf P}, we can then count the words in $L'$ using b-numbers. One possible enumeration is to use the binary representation for each word in~$\Sigma$ and add a bit that indicates whether or not~$M$ accepts the word. Here, it shows that uncertainty in the representation of a b-number has an immediate effect on the enumeration of~$L'$. There is either uncertainty in the input of~$M$ or the output of~$M$, or both.

The proposed representation of natural numbers and the uncertainty it introduces into enumerations affects the computational complexity of problems. In particular, b-numbers add a new facet to problems by allowing to specify upper bounds on the uncertainty in the output of Turing machines.
As a first result, the following Theorem~\ref{PeanoTheorem} establishes a direct connection between the Peano axiom and the relationship between {\bf P} and {\bf NP}.

\begin{theorem} [P theorem]\label{PeanoTheorem}
For computations on b-numbers, {\bf P}={\bf NP} follows from the first Peano axiom.
\end{theorem}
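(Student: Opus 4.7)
The plan is to argue in two stages. First, I would translate acceptance of the first Peano axiom into a constraint on b-number encodings via Theorem~\ref{EntropyBoundTheorem}. The first Peano axiom postulates the existence of the root number $0$, which in the b-number framework means that the sign bit $b_{0}$ (whose purpose is precisely to record our belief in the existence of the empty set) is no longer in doubt. Once $b_{0}$ is certain, the entire uncertainty budget for a b-number $b$ must collapse onto the remaining bits $b_{n}b_{n-1}\ldots b_{1}$. This places us in the second extreme case analyzed inside the proof of Theorem~\ref{EntropyBoundTheorem}: the only admissible encoding is the length encoding, for which $b=n$, i.e.\ the bit string used to represent a natural number has length equal to the value it represents.

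Second, I would exploit this length encoding to show that every $\mathbf{NP}$ language falls into $\mathbf{P}$. For any $L\in\mathbf{NP}$, let $N$ be a nondeterministic Turing machine deciding $L$ in time polynomial in the traditional binary length of its input, which for a b-number of value $b$ is $\lceil\log_{2}(b+1)\rceil$. A standard exhaustive simulation of the nondeterministic branches yields a deterministic machine $D$ whose worst-case runtime is $2^{\mathrm{poly}(\log b)}$, and this bound is polynomial in $b$. Under the length encoding forced by the first Peano axiom, however, the input size of a b-number whose value is $b$ is $b$ itself; therefore the runtime of $D$, measured against the length-encoded input size, is bounded by a polynomial in $n$. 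This places $L$ in $\mathbf{P}$. Together with the trivial inclusion $\mathbf{P}\subseteq\mathbf{NP}$, we obtain $\mathbf{P}=\mathbf{NP}$.

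The step I expect to be the main obstacle is the bookkeeping around how the computation tuple $(T,b,o)$ introduced at the end of Section~\ref{IntrinsicUncertaintySection}---where the output bit plays the role of the sign bit---is re-encoded under the length regime without altering which inputs are accepted. One has to check that the bijective re-encoding from the traditional binary representation to the length representation is admissible for a Turing machine, that it respects the convention that $o$ serves as the sign bit, and that the $T_{M}(n)$ bound is measured against the new input size rather than the old one. Theorem~\ref{EntropyReductionTheorem} supplies the mechanism for a padding-style bijection $M$ and can therefore be invoked to justify the reduction; the subtle point is the accounting of input length versus input value, which is where an imprecise step could silently turn an exponential runtime into a polynomial one in the wrong parameter.
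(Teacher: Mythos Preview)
Your route diverges from the paper's in two essential respects, and the divergence matters.

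First, you and the paper read the effect of the first Peano axiom on the sign bit oppositely. You argue that accepting the axiom makes $b_{0}$ certain, which forces the second extreme of Theorem~\ref{EntropyBoundTheorem} (length encoding, uncertainty $I(1/(b+1))$). The paper instead treats the axiom as a crisp assertion whose truth cannot itself be verified, so \emph{both} extremes remain live; it then explicitly selects the \emph{other} extreme---``assuming that the input bits are correct''---i.e.\ the case where the uncertainty sits in the sign bit and the efficient binary coding is used, with uncertainty $I\bigl(1/(\lceil\log_{2}(n+1)\rceil+1)\bigr)$. In that regime the mapping~$M$ of Theorem~\ref{EntropyReductionTheorem} needs only a polynomial number of added bits, and the paper concludes that \emph{every} computation on b-numbers can be carried out in polynomial time; hence $\mathbf{P}=\mathbf{NP}$. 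Crucially, the paper never simulates a nondeterministic machine at all: its mechanism is that once the uncertainty threshold is met, $T$ ``can output any result bit,'' so the only work is the padding performed by~$M$.

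Second, because you chose the length-encoding extreme, you are driven to a classical exhaustive-search argument, and that argument carries exactly the gap you yourself flag. You take an $\mathbf{NP}$ witness machine whose time bound is polynomial in the \emph{binary} length $\lceil\log_{2}(b+1)\rceil$, but then measure the deterministic simulator's runtime against the \emph{unary} length~$b$. If both $\mathbf{P}$ and $\mathbf{NP}$ are defined over the same (length) encoding---as they must be for the equality to be meaningful---the nondeterministic bound becomes $\mathrm{poly}(b)$ and the simulation costs $2^{\mathrm{poly}(b)}$, which is no longer polynomial in~$b$. So the step ``$2^{\mathrm{poly}(\log b)}$ is polynomial in~$b$'' is doing illegitimate work unless you justify why the two classes may be measured against different encodings; the paper avoids this entirely by not invoking simulation and by taking the opposite extreme case.
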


\begin{proof}
The first Peano axiom introduces a crisp statement about the coding of the empty set. In other words, the first Peano axiom assumes a specific value of the sign bit in the b-number representation of a natural number. Obviously, as an axiom cannot be proven, this leaves two extreme possibilities for the uncertainty in a computation~$C$. We can either base our trust on the first Peano axiom, which means that all uncertainty lies in the remaining bit encoding of~$C$, or we can accept that all uncertainty lies in the first Peano axiom, in which case all remaining bits are certain.
These are the two extreme cases of Theorem~\ref{EntropyBoundTheorem}, which means that the uncertainty~$E(C)$ of~$C$ is either $I\left(1/(n+1)\right)$ or $I\left(1/(\lceil\log_{2}(n+1)\rceil+1)\right)$, where $n$ is the number of the traditional encoding of $(T,b)$ for the computation $C = (T,b,o)$. Now, in order to solve the problem for input~$b$, $T$ does not need to solve the problem in its entirety. It just needs to run until the required uncertainty is reached, after which it can output any result bit. In fact, no actual computation is needed for $T$~to solve the problem for input~$b$. It suffices, and is essential at the same time, to add bits until the required uncertainty is reached, using a mapping~$M$ like the one used in Theorem~\ref{EntropyReductionTheorem}. The ratio of the number of certain bits to the number of uncertain bits is either $1/(n+1)$ or $1/(\lceil\log_{2}(n+1)\rceil+1)$. Hence, mapping~$M$ needs to add either an exponential or a polynomial number of bits. Any implementation of~$M$ will therefore take either an exponential or a polynomial number of processing steps for any computation. Consequently, all computations on b-numbers take either exponential or polynomial runtime. Thus, by assuming that the input bits are correct, we can ensure a polynomial runtime, and hence {\bf P}={\bf NP}.
\end{proof}

Theorem~\ref{PeanoTheorem} shows that the complexity classes {\bf P} and {\bf NP} become equal once the first Peano axiom is introduced under b-numbers. According to Section~\ref{NaturalNumbersSection}, however, the first Peano axiom is not needed for the construction of b-numbers. In fact, the next theorem shows that the relationship between {\bf P} and {\bf NP} changes once the first Peano axiom is excluded from the set of axioms. In this case, {\bf P} becomes a proper subset of {\bf NP}. In order to prove this, we can take advantage of graded uncertainty values, which are now possible due to the absence of the first Peano axiom.
The idea is to define a function that is in {\bf NP} but not in {\bf P}, which means that {\bf P} $\neq$ {\bf NP}, and thus {\bf P} $\subset$ {\bf NP}. Theorem~\ref{PNPTheorem} provides such a function.

\begin{theorem} [PNP theorem]\label{PNPTheorem}
Given b-number encodings, let~$T(T',b)$ be a machine that checks with a maximum uncertainty of $I(1/(2^b+1))$ whether $T'$ accepts input~$b$. Then, $T$ is in {\bf NP} but not in {\bf P}.
\end{theorem}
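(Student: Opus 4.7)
The plan is to establish separately that $T\in\mathbf{NP}$ and $T\notin\mathbf{P}$; since $\mathbf{P}\subseteq\mathbf{NP}$ holds unconditionally, the desired proper containment then follows immediately.

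For the non-membership in $\mathbf{P}$, I would re-use the bit-counting argument from the proof of Theorem~\ref{PeanoTheorem}. The target uncertainty $I(1/(2^{b}+1))$ lies strictly below the minimum $I(1/(b+1))$ given by Theorem~\ref{EntropyBoundTheorem}, so meeting it requires invoking the augmentation mapping $M$ of Theorem~\ref{EntropyReductionTheorem}. To drive the ratio of certain to uncertain bits down to $1/(2^{b}+1)$, the mapping must pad the encoding until one of the two bit groups contains on the order of $2^{b}$ bits. Any deterministic machine takes at least one step per added bit, so its runtime is $\Omega(2^{b})$, which is exponential in $b$ and hence not polynomial. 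This rules out $T\in\mathbf{P}$.

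For membership in $\mathbf{NP}$, I would argue that non-determinism absorbs precisely the padding a deterministic implementation has to produce explicitly. A non-deterministic Turing machine can branch over the consistent extensions of the input encoding; each branch commits to one assignment of the uncertain bits, so the required $1/(2^{b}+1)$ ratio is witnessed by the existence of a suitable branch rather than by explicit construction. Along any accepting branch, the remaining work is bounded by the original input length, so $T\in\mathbf{NP}$. Combined with the preceding paragraph, this yields $T\in\mathbf{NP}\setminus\mathbf{P}$ and hence $\mathbf{P}\subset\mathbf{NP}$.

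The main obstacle I anticipate is the $\mathbf{NP}$-membership direction. The deterministic lower bound transfers almost verbatim from Theorem~\ref{PeanoTheorem}, but the claim that non-determinism can meet the entropy target in polynomial time must be argued carefully against the paper's uncertainty framework: one needs to justify that the mere \emph{existence} of a suitable non-deterministic branch is equivalent, at the level of $E(C)$ as used in Corollary~\ref{UncertaintyCorollary}, to actually writing out the padded encoding produced by $M$. Pinning down this equivalence, and in particular ensuring that the branch count interacts with the ratio $1/(2^{b}+1)$ in the way the entropy definition demands, is where I expect most of the real work to lie.
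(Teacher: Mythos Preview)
Your plan diverges from the paper's argument on both halves, though neither divergence is fatal.

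For $T\notin\mathbf{P}$, the paper does not rely on a uniform padding bound over all inputs as you do. Instead it singles out a diagonal instance: setting $T'=T$ and $b=T$ gives the computation $C=(T,TT,o)$, and the paper argues that in the encoding of the string $TTT$ the certain-to-uncertain ratio can only be pushed down to $1/2^{b}$ after exponentially many steps. Your direct bit-counting argument (every input already forces $\Omega(2^{b})$ padding) reaches the same conclusion without diagonalization, so in this sense your route is more elementary; but diagonalization is the device the paper actually invokes, and it is what pins down a concrete worst-case witness rather than a blanket lower bound.

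For $T\in\mathbf{NP}$, the obstacle you flag is real, but the paper does not confront it the way you intend. The paper's argument is simply that an exponential number of deterministic steps---padding via the mapping $M$ of Theorem~\ref{EntropyReductionTheorem}, using for instance the code of $T$ itself as filler---suffices to drive the uncertainty below $I(1/(2^{b}+1))$, and from this alone it concludes $T\in\mathbf{NP}$. There is no appeal to non-deterministic branching, no certificate, and no attempt to show that a polynomial-length branch witnesses the required entropy. In effect the paper treats ``achievable in $\mathcal{O}(2^{n})$ steps'' as sufficient for $\mathbf{NP}$-membership within its framework. So the delicate equivalence you plan to establish---between the existence of a suitable non-deterministic branch and the explicit padded encoding that $M$ produces---is neither attempted nor needed in the paper's own proof; if you pursue it, you will be doing strictly more work than the paper does, and against a notion of $\mathbf{NP}$ that the paper leaves largely informal.
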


\begin{proof}
Theorem~\ref{PNPTheorem} requires that the uncertainty $E(C)$~of any computation $C = (T,T'b,o)$ is lower than $I(1/(2^b+1))$. According to the entropy reduction theorem (Theorem~\ref{EntropyReductionTheorem}), this uncertainty can be reached by adding additional bits to the input in combination with an appropriate mapping~$M$ like in the proof of Theorem~\ref{EntropyReductionTheorem}. For instance, one could use the program code of~$T$ for the bits that need to be added. Obviously, an exponential number of processing steps with $\mathcal O(2^n)$ is sufficient to implement mapping~$M$ and reach the required uncertainty. Thus, $T$ is indeed a member of~{\bf NP}. Furthermore, it is not possible to reach the required upper uncertainty threshold with a polynomial number of processing steps in the worst case scenario. To show this, one can use a diagonal argument. For instance, in the case of $T'=T$ and $b=T$, the computation $C = (T,TT,o)$ takes an exponential number of processing steps to meet the uncertainty requirement. This becomes evident when we look at the encoding of the string~$TTT$. In order for the ratio of the number of certain bits to the number of uncertain bits to be smaller than or equal to $1/2^b$, $T$~has to execute at least an exponential number of processing steps. Thus~$T$ cannot be in {\bf P} and {\bf P} must be a proper subset of {\bf NP}.
\end{proof}

\section{Conclusion}

This paper has shown that the representation of natural numbers affects the computability and computational complexity of programs. The proposed representation based on a set of relaxed Peano axioms, which dispenses with the first Peano axiom, introduces intrinsic uncertainty into the representation of natural numbers. Under the traditional definition of natural numbers, the existence of the empty set is guaranteed by the first Peano axiom, which is either certain or uncertain. This important fact has been underestimated in the literature so far. The proposed representation without the first Peano axiom allows a graded belief in the existence of the empty set. It allows multiple possible representations for a natural number, which are all inductively based on the empty set, but differ in the uncertainty they impose on it.

Naturally, the intrinsic uncertainty in the representation of b-numbers affects the computational complexity of problems. The paper has argued that, for b-number encodings, the computational complexity of a problem depends on the uncertainty that we allow in the output of the Turing machine solving the problem. The first Peano axiom makes a crisp binary decision on the existence of a natural number. Since the statement made by the first Peano axiom can be coded in one bit, which is either certain or uncertain, we are facing two extreme cases. Consequently, after the introduction of the first Peano axiom, a graded belief in the existence of the empty set is no longer possible, and the paper claims that {\bf P} becomes equal to {\bf NP} for this case. If we allow arbitrary uncertainty in the output by removing the first Peano axiom, however, then {\bf P} becomes a proper subset of {\bf NP}, as there exists a function in {\bf NP} that needs exponential runtime to reach the required uncertainty in its output. These results may help shed light on the relationship between the complexity classes {\bf P}~and {\bf NP}.


%
%
\bibliography{JaegerICALP2011}
\bibliographystyle{splncs03}

\end{document}